\begin{document}
\interfootnotelinepenalty=10000 

\title{Deduction of an upper bound on the success probability of port-based teleportation from the no-cloning theorem and the no-signaling principle}

\author{Dami\'an Pital\'ua-Garc\'ia}
\affiliation{Centre for Quantum Information and Foundations, DAMTP, Centre for Mathematical Sciences,
University of Cambridge, Wilberforce Road, Cambridge, CB3 0WA, United Kingdom}

\begin{abstract}
In port-based teleportation, Alice teleports an unknown quantum state $\lvert\psi\rangle$ to one of $N$ ports at Bob's site. Alice applies a measurement and sends Bob the outcome $k$.  Bob only needs to select the $k$th port in order to obtain $\lvert\psi\rangle$. We present a theorem in the spirit of the no-cloning theorem, which says that it is impossible to extract any information from an unknown quantum state if only a single copy of it is provided and if the state remains unchanged. We use this theorem and the no-signaling principle to prove an upper bound on the success probability of port-based teleportation.
\end{abstract}

\maketitle

Quantum teleportation \cite{teleportation} is a fundamental protocol of quantum information theory in which an unknown quantum state $\lvert\psi\rangle$ is destroyed at its original location by Alice and reconstructed at another location by Bob. The protocol works as follows: Alice and Bob must initially share a maximally entangled state, Alice applies a Bell measurement to her systems, she communicates her measurement outcome to Bob, who then applies a unitary correction operation according to Alice's message.

Here, we consider a different type of teleportation protocol, which we refer to as \emph{port-based teleportation} (PBT). PBT was devised by Ishizaka and Hiroshima \cite{IH08,IH09} with the purpose of implementing a universal programmable quantum processor that succeeds with probability arbitrarily close to unity; this task can be achieved using standard teleportation, too, but with a very small success probability for input states of big dimension \cite{NC97}.

In this paper we consider general PBT protocols, which allow Alice to teleport an unknown quantum state $\lvert\psi\rangle$ to one of $N$ ports at Bob's site. PBT requires that Alice and Bob share quantum entanglement and consists of the following steps. Alice applies a measurement with outcome $k\in\lbrace0,1,\dotsc,N\rbrace$; if $k=0$, teleportation fails, otherwise $\lvert\psi\rangle$ is teleported to the $k$th port. Alice communicates $k$ to Bob, who then discards the states at ports with indexes distinct from $k$. No further correction operations are required; this advantage over standard teleportation makes PBT useful in various quantum information tasks. In the probabilistic version of PBT, $\lvert\psi\rangle$ is teleported perfectly but with a success probability $p<1$; in the deterministic version, outcome $k=0$ never occurs but the fidelity of the teleported state is smaller than unity \cite{IH08,IH09}. 

Besides its use as a universal programmable quantum processor, PBT can be used to implement instantaneous nonlocal quantum computation (INLC), reducing exponentially the amount of needed entanglement compared to schemes based on standard teleportation \cite{BK11}. INLC is the application of a nonlocal unitary operation $U$ on a state $\lvert\psi\rangle$ shared by two or more distant parties with a single round of classical communication (CC). If two rounds of CC are allowed, $U$ can be implemented trivially as follows: Alice teleports her part of $\lvert\psi\rangle$ to Bob, who then applies $U$ to $\lvert\psi\rangle$, now in his location, and then teleports Alice's share of the state back to her. However, it is not trivial to complete this task with only one round of CC. It was Vaidman \cite{V03} who first showed how to implement this task using a recursive scheme based on standard teleportation, which consumes an amount of entanglement growing double exponentially with the number of qubits $n$ of the input state $\lvert\psi\rangle$. However, a scheme by Beigi and K{\"{o}}nig \cite{BK11} based on PBT allows the implementation of INLC with an amount of entanglement growing only exponentially with $n$. 

INLC has application to other distributed quantum tasks: it allows the implementation of instantaneous nonlocal measurements (INLM) and also breaks the security of position-based quantum cryptography (PBQC) and some quantum tagging schemes \cite{KBMS06,M10.1,*M10.2,KMS11,LL11,pbqc,BFSS11}. INLM is the measurement of a nonlocal observable in a distributed quantum state with a single round of CC \cite{GR02,V03,GRV03,CCJP10}. Quantum tagging  \cite{KBMS06,M10.1,*M10.2,KMS11,K11.1} and PBQC \cite{pbqc,BFSS11} are cryptographic tasks that rely on quantum information processing and relativistic constraints with the goals of verifying the location of an object and providing secure communication with a party at a given location, respectively.

It is interesting to investigate the limitations and possibilities on quantum information processing tasks that can be derived directly from no-signaling (NS) and other fundamental principles of quantum theory. Some important results obtained with this approach are the following: the maximum fidelity achieved by quantum cloning machines can be deduced from NS \cite{G98}, security of quantum key distribution can be guaranteed as long as NS is satisfied \cite{BHK05}, and the recently discovered information causality principle \cite{ic} implies Tsirelson's bound on the amount of nonlocality for quantum correlations.

In this paper, we show an upper bound on the success probability of probabilistic PBT of a $n$-qubit state from the no-cloning theorem and the no-signaling principle:
\begin{equation}
\label{eq:2}
p\leq\frac{N}{4^n+N-1}.
\end{equation}
Our bound agrees with the maximum success probability obtained in \cite{IH09} for the particular case $n=1$:
\begin{equation}
\label{eq:1}
p_{\max}=\frac{N}{3+N}.
\end{equation}
Thus, we confirm the hypothesis presented in \cite{IH09} that Eq.~(\ref{eq:1}) can be derived from fundamental laws of physics. It is an interesting open problem to find a probabilistic PBT protocol for the case $n>1$ and to see whether our bound is achievable.

Comparing Eqs.~(\ref{eq:2}) and (\ref{eq:1}), we see that $(p_{\max})^n$ can be bigger than the upper bound on $p$, which means that applying PBT individually to each qubit of $\lvert\psi\rangle$ can give a higher success probability than applying PBT globally to $\lvert\psi\rangle$. However, we justify the restriction that $\lvert\psi\rangle$ must be localized to a single port by noting that the advantage of PBT as described here, at least for implementing a universal programmable quantum processor and INLC, is that, before receiving Alice's message, Bob can apply the desired quantum operation to the state at every port, after which, $\lvert\psi\rangle$ is transformed as desired. Clearly, this advantage is lost if the qubits of $\lvert\psi\rangle$ spread among different ports.

Before summarizing our proof of Eq.~(\ref{eq:2}), it will be useful to give a general description of the PBT protocol. For simplicity of the exposition we consider a pure input state $\lvert\psi\rangle_a\in\mathcal{H}_a$. Due to the linearity of quantum theory, the protocol works for mixed states too. Alice and Bob share a fixed entangled state $\lvert\xi\rangle_{AB}\in\mathcal{H}_A\otimes\mathcal{H}_B$, which is independent of $\lvert\psi\rangle$ because this is arbitrary and unknown. Bob has $N$ ports $\lbrace B_j\rbrace_{j=1}^N$, hence $\mathcal{H}_B=\bigotimes\limits_{j=1}^N\mathcal{H}_{B_j}$, where $\text{dim}\mathcal{H}_{B_j}=\text{dim}\mathcal{H}_a=2^n~\forall j\in\lbrace1,\dotsc,N\rbrace$. System $A$ includes any ancilla held by Alice and so has an arbitrarily big dimension. However, in \cite{IH08,IH09}, $\mathcal{H}_A=\bigotimes\limits_{j=1}^N\mathcal{H}_{A_j}$ and $\text{ dim}\mathcal{H}_{A_j}=\text{dim}\mathcal{H}_a~\forall j\in\lbrace1,\dotsc,N\rbrace$. We follow a notation in which subindex $a$ is written in $\lvert\psi\rangle_a$ only when we wish to
emphasize that the system $a$ is in state $\lvert\psi\rangle$, similarly for other states and systems. The initial global state is
\begin{equation}
\label{eq:7}
\lvert G\rangle_{aAB}=\lvert\psi\rangle_{a}\lvert\xi\rangle_{AB}.
\end{equation}
Alice applies a generalized measurement, which in general can be decomposed into a unitary operation $U$ acting jointly on $a$ and $A$, followed by a projective measurement. Alice obtains outcome $k\in\lbrace1,\dotsc,N\rbrace$ with probability $q_k>0$ and $k=0$ with probability $1-\sum_{k=1}^N q_k$. If $k\neq0$, the global state is transformed into
\begin{equation}
\label{eq:8}
\lvert G_k\rangle_{aAB}=\lvert\psi\rangle_{B_k}\lvert R_k\rangle_{aA\tilde{B}_k},
\end{equation}
where $\tilde{B}_k\equiv B_1B_2\dotsm B_{k-1}B_{k+1}B_{k+2}\dotsm B_N$, so state $\lvert\psi\rangle$ is teleported to port $B_k$. However, if $k=0$, PBT fails; in this case we denote the final state as
\begin{equation}
\label{eq:8.5}
\lvert G_0\rangle_{aAB}=\lvert F^{(\psi)}\rangle_{aAB}.
\end{equation}
The total success probability is
\begin{equation}
\label{eq:4}
p\equiv\sum_{j=1}^N q_j.
\end{equation}

Now we are able to summarize the proof. First, we present a version of the no-cloning theorem, which allows us to show that the probabilities $q_k$ and states $\lvert R_k\rangle$ cannot depend on $\lvert\psi\rangle$, while the state $\lvert F^{(\psi)}\rangle$ must do, as the notation suggests. Second, we use the no-signaling principle to show that the state $\hat{\eta}_j$ of port $B_j$ before implementing PBT must be of the form
\begin{equation}
\label{eq:3}
\hat{\eta}_j=q_j\lvert\psi\rangle\langle\psi\lvert+\sum_{\substack{i=1\\i\neq j}}^N q_i\hat{\gamma}_{j,i}+(1-p)\hat{\omega}_j^{(\psi)},
\end{equation}
where $\hat{\gamma}_{j,i}$ and $\hat{\omega}_j^{(\psi)}$ are the states to which $B_j$ transforms into after outcomes $k=i\notin\lbrace 0,j\rbrace$ and $k=0$ are obtained, respectively. Since $\hat{\eta}_j$, $\hat{\gamma}_{j,i}$ and $\hat{\omega}_j^{(\psi)}$ are reduced states of $\lvert\xi\rangle$, $\lvert R_i\rangle$ and $\lvert F^{(\psi)}\rangle$, respectively, $\hat{\eta}_j$ and $\hat{\gamma}_{j,i}$ do not depend on $\lvert\psi\rangle$, while $\hat{\omega}_j^{(\psi)}$ does. Third, we use the independence of these states from $\lvert\psi\rangle$ and Eq.~(\ref{eq:3}) to show that  if there exists a protocol that achieves success probability $q_j$  for some states $\hat{\eta}_j$ and $\hat{\gamma}_{j,i}$ then there exists a protocol satisfying
\begin{equation}
\label{eq:5}
\hat{\eta}_j=\hat{\gamma}_{j,i}=\frac{I}{2^n},
\end{equation}
which achieves the same success probability. Fourth, we assume Eq.~(\ref{eq:5}) and present a protocol in which Alice tries to send Bob a random message of $2n$ bits. This protocol combines the superdense coding protocol \cite{sdc} and a modified PBT protocol in which Alice holds every port except for $B_j$, which is held by Bob, but does not allow communication. We show that this protocol succeeds with probability
\begin{equation}
\label{eq:6}
p_j'=q_j+\frac{1}{4^n}(p-q_j)+(1-p)r_j,
\end{equation}
for some probability $r_j$. Since there is not communication in such a protocol, the no-signaling principle implies that Bob cannot obtain any information about Alice's message. This means that Bob can only obtain the correct message with the probability of making a random guess: $p_j'=4^{-n}$ \footnote{It is not possible that $p_j'<4^{-n}$, because this would imply that a modified protocol in which Bob applies a permutation to the obtained message succeeds with a probability higher than $4^{-n}$, violating the no-signaling principle.}. Thus, we have
\begin{equation}
\label{eq:6.5}
q_j+\frac{1}{4^n}(p-q_j)+(1-p)r_j=\frac{1}{4^n}.
\end{equation}
Summing over $j\in\lbrace1,2,\dotsc,N\rbrace$ and using Eq.~(\ref{eq:4}), we obtain that $p=f_{n,N}(R)$, where $R\equiv \sum_{j=1}^{N}r_j$ and
$f_{n,N}(R)\equiv \Bigl(1+\frac{4^n-1}{N-4^nR}\Bigr)^{-1}$. It is straightforward to obtain that the condition $0\leq p\leq 1$ is satisfied only if $R\leq 4^{-n}N$ . Since the function $f_{n,N}(R)$ decreases monotonically with $R$ in the range $[0,4^{-n}N]$, we have that $f_{n,N}(R)\leq f_{n,N}(0)=N/(N+4^n-1)$. Thus, we obtain the desired bound,
\begin{equation}
p\leq\frac{N}{4^n+N-1}.\nonumber
\end{equation}
We continue with the arguments of the proof.

The following theorem is in the spirit of the no-cloning theorem \cite{WZ82,D82}, in a probabilistic \cite{B07} and a stronger \cite{J02,*J03} version, and tells us that it is impossible to extract any information from a single copy of an unknown quantum state without modifying it.
\newtheorem*{Theorem}{Theorem}
\begin{Theorem}
\label{Theorem}
Consider a single copy of an unknown pure quantum state $\lvert\psi\rangle_a\in\mathcal{H}_a$ and a fixed initial state $\lvert\xi\rangle_b\in\mathcal{H}_b$ of an auxiliary system of arbitrary dimension. A physical operation $O$ that induces a transformation $T_k$:
\begin{equation}
\lvert\psi\rangle_a\lvert\xi\rangle_b\longrightarrow\lvert\psi\rangle_a\lvert R_k^{(\psi)}\rangle_b,\nonumber
\end{equation}
with probability $q_k^{(\psi)}>0$, for $k\in\lbrace1,2,\dotsc,N\rbrace$, and a transformation $T_0$:
\begin{equation}
\lvert\psi\rangle_a\lvert\xi\rangle_b\longrightarrow\lvert F^{(\psi)}\rangle_{ab},\nonumber
\end{equation}
with probability $1-\sum_{k=1}^N q_k^{(\psi)}$, for all $\lvert\psi\rangle_a\in\mathcal{H}_a$, in which the index $j\in\lbrace0,1,\dotsc,N\rbrace$ of the induced transformation $T_j$ is known after $O$ is completed, is possible only if
\begin{equation}
q_k^{(\psi)}=q_k^{(\phi)},\quad\lvert R_k^{(\psi)}\rangle_b=\lvert R_k^{(\phi)}\rangle_b,\quad\langle F^{(\phi)}\vert F^{(\psi)}\rangle=\langle\phi\vert\psi\rangle,\nonumber
\end{equation}
for all $\lvert\psi\rangle_a,\lvert\phi\rangle_a\in\mathcal{H}_a$ and $k\in\lbrace1,2,\dotsc,N\rbrace$.
\end{Theorem}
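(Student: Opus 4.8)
The plan is to reduce the operation to a canonical dilated form and then exploit linearity in the spirit of no-cloning. First I would argue that, without loss of generality, $O$ can be realized by a unitary $U$ acting on $ab$ (any measurement apparatus or environment being absorbed into the auxiliary system $b$, whose dimension is arbitrary), followed by a projective measurement that reads out $k$. Because the index $j$ is known once $O$ is completed, the branches corresponding to different outcomes must be perfectly distinguishable and hence occupy mutually orthogonal subspaces $\{\mathcal{P}_k\}_{k=0}^N$ of $\mathcal{H}_b$; I would encode the readout in a pointer subsystem of $b$ so that the measurement acts as $\mathbb{1}_a\otimes P_k$. Writing $S|\psi\rangle\equiv U(|\psi\rangle_a|\xi\rangle_b)$, which is an isometry from $\mathcal{H}_a$ into $\mathcal{H}_a\otimes\mathcal{H}_b$, the hypotheses give
\[ S|\psi\rangle=\sum_{k=1}^N\sqrt{q_k^{(\psi)}}\,|\psi\rangle_a|R_k^{(\psi)}\rangle_b+\sqrt{q_0^{(\psi)}}\,|F^{(\psi)}\rangle_{ab}, \]
with $|R_k^{(\psi)}\rangle_b\in\mathcal{P}_k$ and $|F^{(\psi)}\rangle_{ab}\in\mathcal{H}_a\otimes\mathcal{P}_0$. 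Taking the overlap $\langle\phi|S^\dagger S|\psi\rangle=\langle\phi|\psi\rangle$ and using the orthogonality of the $\mathcal{P}_k$ to kill all cross terms yields the single master relation
\[ \langle\phi|\psi\rangle\Bigl(1-\sum_{k=1}^N\sqrt{q_k^{(\phi)}q_k^{(\psi)}}\,\langle R_k^{(\phi)}|R_k^{(\psi)}\rangle\Bigr)=\sqrt{q_0^{(\phi)}q_0^{(\psi)}}\,\langle F^{(\phi)}|F^{(\psi)}\rangle, \]
valid for all $|\phi\rangle,|\psi\rangle$.

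The crux — and the step I expect to be the main obstacle — is to show that the successful branches carry no dependence on $|\psi\rangle$. I would isolate the success part via $S_{\mathrm{succ}}\equiv(\mathbb{1}_a\otimes\Pi)\,S$, with $\Pi$ the projector onto $\bigoplus_{k\ge1}\mathcal{P}_k$, obtaining a linear map $S_{\mathrm{succ}}|\psi\rangle=|\psi\rangle_a|V^{(\psi)}\rangle_b$ where $|V^{(\psi)}\rangle_b\equiv\sum_{k\ge1}\sqrt{q_k^{(\psi)}}|R_k^{(\psi)}\rangle_b$. The key point is that the $a$-factor is exactly $|\psi\rangle$, the undisturbed copy, so linearity of $S_{\mathrm{succ}}$ is very restrictive: expanding $|\psi\rangle=\sum_i c_i|i\rangle$ in a basis and comparing $S_{\mathrm{succ}}|\psi\rangle$ with $\sum_i c_i\,S_{\mathrm{succ}}|i\rangle$ forces $|V^{(\psi)}\rangle=|V^{(i)}\rangle$ for every $i$ in the support of $|\psi\rangle$. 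Hence $|V^{(\psi)}\rangle\equiv|V\rangle$ is independent of $|\psi\rangle$ — this is precisely where the persistence of a perfect copy of $|\psi\rangle$ in $a$ collides with any extraction of information, i.e. the no-cloning obstruction. Projecting the fixed vector $|V\rangle$ onto the fixed subspace $\mathcal{P}_k$ gives the fixed vector $\sqrt{q_k^{(\psi)}}|R_k^{(\psi)}\rangle_b=P_k|V\rangle$; taking its norm shows $q_k^{(\psi)}=q_k$ is $|\psi\rangle$-independent, and normalizing shows $|R_k^{(\psi)}\rangle_b=|R_k\rangle_b$ is $|\psi\rangle$-independent, establishing the first two conclusions.

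With $q_k$ and $|R_k\rangle$ now constant, the success sum in the master relation collapses to $\sum_{k\ge1}q_k=p$, so the relation reduces to $\langle\phi|\psi\rangle(1-p)=q_0\langle F^{(\phi)}|F^{(\psi)}\rangle$. Since $q_0=1-p$, this gives $(1-p)\langle F^{(\phi)}|F^{(\psi)}\rangle=(1-p)\langle\phi|\psi\rangle$, hence $\langle F^{(\phi)}|F^{(\psi)}\rangle=\langle\phi|\psi\rangle$ whenever $q_0>0$; if $q_0=0$ there is no failure branch and the statement about $F$ is vacuous. Two points will need care: justifying that the apparatus and environment may be absorbed into the arbitrary-dimensional $b$, so that the failure overlap is computed on the full auxiliary system (otherwise a surviving environment factor $\langle\mu_0^{(\phi)}|\mu_0^{(\psi)}\rangle$ weakens the conclusion to $|\langle F^{(\phi)}|F^{(\psi)}\rangle|\ge|\langle\phi|\psi\rangle|$), and noting that the case $\dim\mathcal{H}_a=1$ is trivial while for $\dim\mathcal{H}_a\ge2$ the linearity argument applies as stated.
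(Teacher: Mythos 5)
Your proposal is correct and follows essentially the same route as the paper's proof: a unitary dilation with the outcome recorded in orthogonal branches (your subspaces $\mathcal{P}_k$ play the role of the paper's pointer states $\lvert k\rangle_\pi$), linearity applied to a basis expansion of $\lvert\psi\rangle$ to force the success branches $q_k^{(\psi)}$, $\lvert R_k^{(\psi)}\rangle$ to be $\lvert\psi\rangle$-independent, and preservation of inner products under $U$ to extract $\langle F^{(\phi)}\vert F^{(\psi)}\rangle=\langle\phi\vert\psi\rangle$. Your packaging of the linearity step via $S_{\mathrm{succ}}=(\mathbb{1}_a\otimes\Pi)S$ is a mild (and clean) reorganization of the paper's direct coefficient comparison, not a different argument, and your closing caveats (absorbing the environment into $b$, the vacuous case $p=1$) are consistent with the paper's setup.
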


The proof of the theorem is presented in the Supplemental Material. The theorem implies the following lemma.

\newtheorem*{Lemma}{Lemma}
\begin{Lemma}
\label{Lemma}
In a PBT protocol, as described by Eqs.~(\ref{eq:7})--(\ref{eq:8.5}), for every input state $\lvert\psi\rangle_a$ the following is true. The probability $q_k$ of successful teleportation to port $B_k$ and the residual state $\lvert R_k\rangle_{aA\tilde{B}_k}$ when $\lvert\psi\rangle$ is teleported to port $B_k$ do not depend on $\lvert\psi\rangle$. However, the global state $\lvert F^{(\psi)}\rangle_{aAB}$ obtained after a failed PBT depends on $\lvert\psi\rangle$.
\end{Lemma}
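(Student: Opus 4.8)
The plan is to recognize the PBT protocol as an instance of the setup treated in the Theorem, after a minor cosmetic modification, and then simply read off the three conclusions. The obstacle to a direct application is that in PBT the teleported state $\lvert\psi\rangle$ ends up in port $B_k$ [Eq.~(\ref{eq:8})], whereas the Theorem requires the unknown state to remain in the \emph{same} system $a$ both before and after the operation $O$. I would therefore supplement the PBT measurement by a conditional swap: upon learning the classical outcome $k\in\lbrace1,\dotsc,N\rbrace$ one swaps system $a$ with port $B_k$, while upon outcome $k=0$ nothing is done. Since $\dim\mathcal{H}_{B_k}=\dim\mathcal{H}_a=2^n$ this swap is well defined, and because it is conditioned only on the already-known classical outcome it is a legitimate physical operation that leaves every $q_k$ unchanged.

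After this modification the composite operation $O'$ maps $\lvert\psi\rangle_a\lvert\xi\rangle_{AB}$ exactly as in the Theorem, with the auxiliary system identified as $b\equiv AB$ and $\lvert\xi\rangle_b=\lvert\xi\rangle_{AB}$. On outcome $k\neq 0$ the state becomes $\lvert\psi\rangle_a\lvert R_k'\rangle_{AB}$, where $\lvert R_k'\rangle$ is obtained from $\lvert R_k\rangle_{aA\tilde{B}_k}$ by exchanging the labels $a$ and $B_k$; on outcome $k=0$ it becomes $\lvert F^{(\psi)}\rangle_{aAB}$. Applying the Theorem then gives $q_k^{(\psi)}=q_k^{(\phi)}$ and $\lvert R_k^{\prime(\psi)}\rangle=\lvert R_k^{\prime(\phi)}\rangle$ for all $\lvert\psi\rangle,\lvert\phi\rangle$ and all $k$. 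Because the appended swap is a fixed, $\psi$-independent unitary, the independence of $\lvert R_k'\rangle$ from the input transfers immediately to $\lvert R_k\rangle$, which settles the first two assertions of the Lemma.

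For the last assertion I would invoke the third consequence of the Theorem, $\langle F^{(\phi)}\vert F^{(\psi)}\rangle=\langle\phi\vert\psi\rangle$. Were $\lvert F^{(\psi)}\rangle$ independent of the input, we would have $\lvert F^{(\psi)}\rangle=\lvert F^{(\phi)}\rangle$ and hence $\langle\phi\vert\psi\rangle=1$ for every pair of inputs; this fails as soon as $\mathcal{H}_a$ contains two non-parallel states, which it does because $\dim\mathcal{H}_a=2^n\geq 2$. Hence $\lvert F^{(\psi)}\rangle$ must genuinely depend on $\lvert\psi\rangle$, completing the Lemma. I expect the only delicate point to be the justification of the conditional swap, namely arguing that inserting it alters neither the success probabilities nor the $\psi$-dependence one wishes to characterize, so that the three structural conclusions of the Theorem can be imported verbatim.
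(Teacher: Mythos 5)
Your proposal is correct and follows essentially the same route as the paper, whose proof likewise identifies the operation $O$ of the Theorem with the PBT protocol followed by a swap of systems $a$ and $B_k$ conditioned on the outcome $k\neq 0$, and then reads off the conclusions. You merely make explicit two points the paper leaves tacit—that the fixed, $\psi$-independent swap transfers the independence of $\lvert R_k'\rangle$ back to $\lvert R_k\rangle$, and that the inner-product condition $\langle F^{(\phi)}\vert F^{(\psi)}\rangle=\langle\phi\vert\psi\rangle$ forces genuine $\psi$-dependence of $\lvert F^{(\psi)}\rangle$—both of which are sound.
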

\begin{proof}
We can identify the physical operation $O$ in the theorem with the PBT protocol described by Eqs.~(\ref{eq:7})-- (\ref{eq:8.5}) followed by a swap operation of systems $a$ and $B_k$ when outcome $k\neq 0$ is obtained. Therefore, according to the theorem, the probability $q_k$ and the state $\lvert R_k\rangle$ in Eq.~(\ref{eq:8}) do not depend on $\lvert\psi\rangle$, while the state $\lvert F^{(\psi)}\rangle$ in Eq.~(\ref{eq:8.5}) does.
\end{proof}

The fact that the states $\lvert R_k\rangle_{aA\tilde{B}_k}$ do not depend on $\lvert\psi\rangle$, together with the fact that Alice knows the resource state $\lvert\xi\rangle_{AB}$, her unitary $U$, her projective measurement, and its result $k$, implies that Alice knows the states $\lvert R_k\rangle_{aA\tilde{B}_k}$. This will be useful later.

Now we prove Eq.~(\ref{eq:3}). We consider the state $\hat{\eta}_j$ of system $B_j$ held by Bob before the PBT protocol begins. From Eq.~(\ref{eq:7}) we have that
\begin{eqnarray}
\label{eq:8.6}
\hat{\eta}_j&&\equiv\text{Tr}_{aA\tilde{B}_j} \left(\lvert G\rangle\langle G\lvert\right)_{aAB}\nonumber\\&&=\text{Tr}_{A\tilde{B}_j} \left(\lvert\xi\rangle\langle\xi\lvert\right)_{AB}.
\end{eqnarray}
The no-signaling principle implies that from Bob's point of view, his state does not change with Alice's LO if she does not send him any information. However, from Alice's point of view, after she applies her LO, Bob's state changes according to her measurement result $k$. 

1) With probability $q_j, k=j$ and $\hat{\eta}_j$ changes to $\lvert\psi\rangle$, as can be seen from Eq.~(\ref{eq:8}):
\begin{equation}
\label{eq:8.7}
\text{Tr}_{aA\tilde{B}_j} \left(\lvert G_j\rangle\langle G_j\lvert\right)_{aAB}=\left(\lvert\psi\rangle\langle\psi\lvert\right)_{B_j}.\nonumber
\end{equation}

2) With probability $q_i$, $k=i\notin\lbrace0,j\rbrace$ and $\hat{\eta}_j$ changes to some state that we denote as $\hat{\gamma}_{j,i}$. From Eq.~(\ref{eq:8}) we have that
\begin{eqnarray}
\label{eq:8.8}
\hat{\gamma}_{j,i}&&\equiv\text{Tr}_{aA\tilde{B}_j} \left(\lvert G_i\rangle\langle G_i\lvert\right)_{aAB}\nonumber\\&&=\text{Tr}_{aA\tilde{B}_{j,i}} \left(\lvert R_i\rangle\langle R_i\lvert\right)_{aA\tilde{B}_i},
\end{eqnarray}
where $\tilde{B}_{j,i}\equiv B_1\dotsm B_{j-1}B_{j+1}\dotsm B_{i-1}B_{i+1}\dotsm B_N$. In this case $\lvert\psi\rangle$ is successfully teleported to port $B_i\neq B_j$. 

3) With probability $1-p, k=0$ and $\hat{\eta}_j$ changes to some state that we call $\hat{\omega}_j^{(\psi)}$. From Eq.~(\ref{eq:8.5}) we have that
\begin{eqnarray}
\label{eq:8.9}
\hat{\omega}_j^{(\psi)}&&\equiv\text{Tr}_{aA\tilde{B}_{j}} \left(\lvert G_0\rangle\langle G_0\lvert\right)_{aAB}\nonumber\\&&=\text{Tr}_{aA\tilde{B}_{j}} \left(\lvert F^{(\psi)}\rangle\langle F^{(\psi)}\lvert\right)_{aAB}.
\end{eqnarray}
This is the failure result, hence $\hat{\omega}_j^{(\psi)}\neq\lvert\psi\rangle\langle\psi\lvert$ in general. 

Since $\lvert\xi\rangle$ is fixed, we see from  Eq.~(\ref{eq:8.6}) that $\hat{\eta}_j$ does not depend on $\lvert\psi\rangle$. Equations~(\ref{eq:8.8}) and (\ref{eq:8.9}) and the lemma imply that $q_j$ and $\hat{\gamma}_{j,i}$ do not depend on $\lvert\psi\rangle$, while $\hat{\omega}_j^{(\psi)}$ does. Due to the no-signaling principle, Bob cannot learn Alice's outcome before he receives any information from her. Therefore, from Bob's point of view, before receiving any information from Alice, his state is
\begin{equation}
\hat{\eta}_j=q_j\lvert\psi\rangle\langle\psi\lvert+\sum_{\substack{i=1\\i\neq j}}^N q_i\hat{\gamma}_{j,i}+(1-p)\hat{\omega}_j^{(\psi)},\nonumber
\end{equation}
which is Eq.~(\ref{eq:3}).

Now we show that if there exists a protocol that achieves success probability $q_j$  for some states $\hat{\eta}_j$ and $\hat{\gamma}_{j,i}$ then there exists a protocol with corresponding states $\hat{\eta}_j', \hat{\gamma}_{j,i}'$ and $\hat{\omega}_j'^{(\psi)}$ that achieves the same success probability and satisfies $\hat{\eta}_j' = \hat{\gamma}_{j,i}' = \frac{I}{2^n}$, that is, Eq.~(\ref{eq:5}). The claimed protocol, which for convenience we call \emph{primed}, is the following (see details in the Supplemental Material).

We define the set of unitary operations $\lbrace V_l\rbrace_{l=1}^{4^n}\equiv\lbrace\sigma_0,\sigma_1,\sigma_2,\sigma_3\rbrace^{\otimes n}$, where $\sigma_0$ is the identity acting on $\mathbb{C}^2$ and $\lbrace\sigma_i\rbrace_{i=1}^3$ are the Pauli matrices. We will use the identity
\begin{equation}
\label{eq:id}
\frac{I}{2^n}\equiv \frac{1}{4^n}\sum_{l=1}^{4^n} V_l\hat{\rho} V_l^\dagger,
\end{equation}
which is satisfied for any quantum state $\hat{\rho}$ of dimension $2^n$. Consider an ancilla $a'$ with Hilbert space $\mathcal{H}_{a'}$ of dimension $4^n$ at Alice's site, which in general can be included as part of $A$, but which for clarity of the presentation is distinguished as a different system. Let $\lbrace\lvert\mu_l\rangle\rbrace_{l=1}^{4^{n}}$ be an orthonormal basis of $\mathcal{H}_{a'}$. The ancilla $a'$ is prepared in the state $\lvert\phi\rangle\equiv\frac{1}{2^n}\sum_{l=1}^{4^n}\lvert\mu_l\rangle$. Conditioned on $a'$ being in the state $\lvert\mu_l\rangle$, the following operations are performed. Before implementing PBT, Bob's system $B_j$ is prepared in the state $V_l\hat{\eta}_jV_l^{\dagger}$. Then, if Alice applies the PBT protocol described by Eqs.~(\ref{eq:7})--(\ref{eq:8.5}), which satisfies Eq.~(\ref{eq:3}), on her system $aA$ then with probability $q_j$ the state of system $B_j$ transforms into $V_l\lvert\psi\rangle$ and with probability $q_i$ transforms into $V_l\hat{\gamma}_{j,i}V_l^{\dagger}$, where $i\notin\lbrace j,0\rbrace$; this is clear from Eq.~(\ref{eq:3}) and follows from the fact that the operations on $B_j$ commute with those on $aA$ (no-signaling) and from the linearity of quantum theory (see details in the Supplemental Material). Thus, consider that before doing this, Alice applies $V_l^{\dagger}$ on her input state $\lvert\psi\rangle_a$. In this case, the state $\lvert\psi\rangle$ is teleported without error. Since the states of system $B_j$ before implementing PBT and after an outcome $k=i\notin\lbrace j,0\rbrace$ is obtained do not depend on the teleported state, these states remain the same. Hence, in the primed PBT protocol we obtain that after discarding the ancilla $a'$, by taking the partial trace over $\mathcal{H}_{a'}$, the initial state of system $B_j$ is $\hat{\eta}_j'=\frac{1}{4^n}\sum_{l=1}^{4^n}V_l\hat{\eta}_jV_l^{\dagger}$ and its final state after an outcome $k=i\notin\lbrace j,0\rbrace$ is obtained is $\hat{\gamma}_{j,i}'=\frac{1}{4^n}\sum_{l=1}^{4^n}V_l\hat{\gamma}_{j,i}V_l^{\dagger}$, both of which equal $\frac{I}{2^n}$, as follows from Eq.~(\ref{eq:id}). Therefore, we see from Eq.~(\ref{eq:3}) that this protocol satisfies
\begin{equation}
\label{eq:11}
\frac{I}{2^n}=q_j\lvert\psi\rangle\langle\psi\lvert+\sum_{\substack{i=1\\i\neq j}}^N q_i\frac{I}{2^n}+(1-p)\hat{\omega}_j'^{(\psi)},
\end{equation}
where
\begin{equation*}
\hat{\omega}_j'^{(\psi)}\equiv\frac{1}{4^n}\sum_{l=1}^{4^n}V_l\hat{\omega}_j^{(\psi_l)}V_l^{\dagger},
\end{equation*}
and $\psi_l$ refers to dependence on the state $V_l^{\dagger}\lvert\psi\rangle$.

We have shown that the previous PBT protocol succeeds with probability $q_j$ and satisfies Eq.~(\ref{eq:5}). Thus, we assume Eq.~(\ref{eq:5}) in what follows, and without loss of generality we consider that the system $a'$ is included in $A$, hence we do not need to mention it again.

Now we present a protocol in which Alice tries to send Bob a random message of $2n$ bits that succeeds with probability $p_j'$, given by Eq.~(\ref{eq:6}). Note that so far we have considered the input system $a$ to be in a pure state. However, the previous arguments work if $a$ is in a mixed state too. Thus, consider that $a$ is in a bipartite maximally entangled state with system $b$, held by Bob, and that Alice and Bob perform superdense cding (SDC) \cite{sdc} using this state. Alice applies a local unitary on $a$ that encodes a message of $2n$ bits, after which the system $ab$ is in the state $\lvert\psi\rangle_{ab}$. But, instead of sending system $a$ directly to Bob, Alice teleports its state using the modified PBT protocol described below in which no communication is allowed. Bob completes SDC by measuring the system $B_jb$ in an orthonormal basis that includes the state $\lvert\psi\rangle$. Bob obtains Alice's message correctly if his outcome corresponds to the state $\lvert\psi\rangle$.

Bob has the system $B_jb$ and Alice has the system $aA\tilde{B}_j$. Similar to the original PBT protocol, the initial global state is given by $\lvert\psi\rangle_{ab}\lvert\xi\rangle_{AB}$, and Alice applies the same local operations (LO) on systems $a$ and $A$, only. Therefore, if Alice's measurement result is $k\neq 0$, the final state is $\lvert\psi\rangle_{B_kb}\lvert R_k\rangle_{aA\tilde{B}_k}$ and the residual state $\lvert R_k\rangle_{aA\tilde{B}_k}$ is known by her. However, \emph{Alice is not allowed to communicate with Bob}. From Eq.~(\ref{eq:5}), we assume that Bob's register $B_j$ is initially in the completely mixed state, meaning that it is maximally entangled with its purifying system, at Alice's site. Consider the possible situations according to Alice's outcome $k$.

1) Alice obtains $k=j$, so the system $B_jb$ is transformed into the state $\lvert\psi\rangle_{B_jb}$; this occurs with probability $q_j$. Bob applies the SDC measurement and obtains Alice's message correctly.

2) Alice obtains $k=i\notin\lbrace0,j\rbrace$, so the state of system $a$ is teleported to the register $B_i$ at Alice's site; this occurs with probability $q_i$. We denote the composite system $aA\tilde B_{j,i}$ as $A_{j,i}$. Alice has the systems $A_{j,i}$ and $B_i$, while Bob has the system $B_jb$. The global system is in the state $\lvert\psi\rangle_{B_ib}\lvert R_i\rangle_{A_{j,i}B_j}$. Equation~(\ref{eq:5}) tells us that the system $B_j$ is completely mixed, meaning that it is maximally entangled with its purifying system $A_{j,i}$,
\begin{equation}
\lvert R_i\rangle_{A_{j,i}B_j}=\frac{1}{\sqrt{2^n}}\sum_{l=1}^{2^n}\lvert l_i \rangle_{A_{j,i}}\lvert l_i \rangle_{B_j}\nonumber,
\end{equation}
where $\lbrace\lvert l_i\rangle\rbrace_{l=1}^{2^n}$ is the Schmidt basis of $\lvert R_i\rangle_{A_{j,i}B_j}$, which is known by Alice. Therefore, Alice can apply the LO of the standard teleportation protocol \cite{teleportation} to the systems $B_i$ and $A_{j,i}$ in order to teleport the state of system $B_i$ to system $B_j$. Then, Bob applies the SDC measurement on $B_jb$. Since communication is not allowed, the no-signaling principle implies that Bob obtains Alice's message correctly with probability $4^{-n}$. Thus, if $k\notin\lbrace0,j\rbrace$, the success probability is
\begin{equation}
\frac{1}{4^n}\sum_{\substack{i=1\\i\neq j}}^N q_i=\frac{1}{4^n}(p-q_j).\nonumber
\end{equation}

3) Alice obtains $k=0$, hence the protocol fails; this occurs with probability $1-p$. In this case, we should allow for the possibility that the final state of the system $B_jb$ has nonzero overlap with the input state $\lvert\psi\rangle$. Therefore, after applying the SDC measurement, the system $B_jb$ transforms into the state $\lvert\psi\rangle_{B_jb}$ with some probability $r_j$. Thus, if $k=0$, Bob obtains Alice's message with probability $r_j$.

The total success probability $p_j'$ of the previous protocol is given by Eq.~(\ref{eq:6}):
\begin{equation}
p_j'=q_j+\frac{1}{4^n}(p-q_j)+(1-p)r_j.\nonumber
\end{equation}
Thus, Eq.~(\ref{eq:6.5}) and our main result, Eq.~(\ref{eq:2}), follow.

In summary, we have shown an upper bound on the success probability $p$ of probabilistic port-based teleportation (PBT) of an unknown quantum state of $n$ qubits as a function of $n$ and the number of ports $N$, Eq.~(\ref{eq:2}). Our proof is based on the no-signaling principle and a version of the no-cloning theorem, which we have presented in this paper. Our bound on $p$ agrees with the maximum success probability for the case $n=1$ \cite{IH09}. A probabilistic PBT protocol for the case $n>1$ has not been developed explicitly; it would be interesting to know whether our bound can be achieved in this case too.

\begin{acknowledgments}
I would like to thank Adrian Kent for much assistance with this work and Tony Short for helpful discussions. I acknowledge financial support from CONACYT M\'exico and partial support from Gobierno de Veracruz. 
\end{acknowledgments}

\section{Supplemental Material}
\subsection{Proof of the theorem}
\newtheorem*{TheoremApp}{Theorem}
\begin{TheoremApp}
\label{TheoremApp}
Consider a single copy of an unknown pure quantum state $\lvert\psi\rangle_a\in\mathcal{H}_a$ and a fixed initial state $\lvert\xi\rangle_b\in\mathcal{H}_b$ of an auxiliary system of arbitrary dimension. A physical operation $O$ that induces a transformation $T_k$:
\begin{equation}
\lvert\psi\rangle_a\lvert\xi\rangle_b\longrightarrow\lvert\psi\rangle_a\lvert R_k^{(\psi)}\rangle_b,\nonumber
\end{equation}
with probability $q_k^{(\psi)}>0$, for $k\in\lbrace1,2,\dotsc,N\rbrace$, and a transformation $T_0$:
\begin{equation}
\lvert\psi\rangle_a\lvert\xi\rangle_b\longrightarrow\lvert F^{(\psi)}\rangle_{ab},\nonumber
\end{equation}
with probability $1-\sum_{k=1}^N q_k^{(\psi)}$, for all $\lvert\psi\rangle_a\in\mathcal{H}_a$, in which the index $j\in\lbrace0,1,\dotsc,N\rbrace$ of the induced transformation $T_j$ is known after $O$ is completed, is possible only if
\begin{equation}
q_k^{(\psi)}=q_k^{(\phi)},\quad\lvert R_k^{(\psi)}\rangle_b=\lvert R_k^{(\phi)}\rangle_b,\quad\langle F^{(\phi)}\vert F^{(\psi)}\rangle=\langle\phi\vert\psi\rangle,\nonumber
\end{equation}
for all $\lvert\psi\rangle_a,\lvert\phi\rangle_a\in\mathcal{H}_a$ and $k\in\lbrace1,2,\dotsc,N\rbrace$.
\end{TheoremApp}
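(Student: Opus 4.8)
The plan is to model the operation $O$ as a quantum instrument and then exploit the linearity of its measurement operators. Since $O$ reveals the index $j$, it is described by a collection of completely positive maps $\{\mathcal{E}_j\}_{j=0}^N$ whose sum is trace preserving. Absorbing the fixed input $\lvert\xi\rangle_b$, I regard each branch as acting on $\mathcal{H}_a$ and producing a state on $\mathcal{H}_a\otimes\mathcal{H}_b$. The crucial first observation is that, by hypothesis, conditioned on each outcome $j$ the output is a definite pure state for \emph{every} input $\lvert\psi\rangle_a$; a completely positive map whose output is rank one on all pure inputs must have Kraus operators that are all proportional to a single operator. Hence each branch reduces to one measurement operator $M_j:\mathcal{H}_a\to\mathcal{H}_a\otimes\mathcal{H}_b$ with
\[
M_k\lvert\psi\rangle=\sqrt{q_k^{(\psi)}}\,\lvert\psi\rangle_a\lvert R_k^{(\psi)}\rangle_b\ (k\geq1),\qquad M_0\lvert\psi\rangle=\sqrt{q_0^{(\psi)}}\,\lvert F^{(\psi)}\rangle_{ab},
\]
together with the completeness relation $\sum_{j=0}^N M_j^\dagger M_j=I_a$ inherited from trace preservation.

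Next I would use the linearity of $M_k$ on the success branches. Writing $\lvert\psi\rangle=\alpha\lvert\psi_1\rangle+\beta\lvert\psi_2\rangle$ with $\lvert\psi_1\rangle,\lvert\psi_2\rangle$ linearly independent and $\alpha,\beta\neq0$, linearity of $M_k$ forces
\[
\alpha\sqrt{q_k^{(\psi_1)}}\lvert\psi_1\rangle\lvert R_k^{(\psi_1)}\rangle+\beta\sqrt{q_k^{(\psi_2)}}\lvert\psi_2\rangle\lvert R_k^{(\psi_2)}\rangle=\sqrt{q_k^{(\psi)}}\bigl(\alpha\lvert\psi_1\rangle+\beta\lvert\psi_2\rangle\bigr)\lvert R_k^{(\psi)}\rangle.
\]
Applying the dual functionals of $\lvert\psi_1\rangle,\lvert\psi_2\rangle$ on system $a$ and matching coefficients gives $\sqrt{q_k^{(\psi_i)}}\lvert R_k^{(\psi_i)}\rangle=\sqrt{q_k^{(\psi)}}\lvert R_k^{(\psi)}\rangle$; taking norms (the $\lvert R_k\rangle$ are normalized) yields $q_k^{(\psi_1)}=q_k^{(\psi_2)}$ and then $\lvert R_k^{(\psi_1)}\rangle=\lvert R_k^{(\psi_2)}\rangle$. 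Since any two distinct states are either linearly independent or equal, this establishes that $q_k^{(\psi)}$ and $\lvert R_k^{(\psi)}\rangle$ are independent of $\lvert\psi\rangle$ for all $k\in\{1,\dots,N\}$, and hence that $q_0^{(\psi)}=1-\sum_{k=1}^N q_k^{(\psi)}$ is independent of $\lvert\psi\rangle$ as well.

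Finally, I would feed the completeness relation into an inner product. Evaluating $\langle\phi\lvert\bigl(\sum_j M_j^\dagger M_j\bigr)\lvert\psi\rangle=\langle\phi\vert\psi\rangle$ and using the $\psi$-independence just proved, each success term contributes $q_k\langle\phi\vert\psi\rangle$ (because $\langle R_k\vert R_k\rangle=1$) and the failure term contributes $q_0\langle F^{(\phi)}\vert F^{(\psi)}\rangle$, so that $\langle\phi\vert\psi\rangle=(1-q_0)\langle\phi\vert\psi\rangle+q_0\langle F^{(\phi)}\vert F^{(\psi)}\rangle$. Rearranging gives $q_0\bigl(\langle F^{(\phi)}\vert F^{(\psi)}\rangle-\langle\phi\vert\psi\rangle\bigr)=0$, which is the desired relation $\langle F^{(\phi)}\vert F^{(\psi)}\rangle=\langle\phi\vert\psi\rangle$ whenever failure can occur ($q_0>0$); if $q_0=0$ the statement about $F$ is vacuous.

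The step I expect to be the main obstacle is the very first one: justifying rigorously that each outcome branch collapses to a single measurement operator $M_j$, i.e.\ that a completely positive map with rank-one output on every pure input is generated by one Kraus operator. This needs the observation that rank-one output forces all Kraus vectors $\{K_\alpha\lvert\psi\rangle\}_\alpha$ to be parallel for each $\lvert\psi\rangle$, together with a linearity argument (again testing on superpositions) showing the proportionality constants cannot depend on $\lvert\psi\rangle$; one must also keep careful track of the free phases between $\sqrt{q_j^{(\psi)}}$ and the normalized states. Once this reduction is in hand, the remaining algebra is routine.
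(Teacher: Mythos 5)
Your reduction of each outcome branch to a single measurement operator rests on the lemma ``a completely positive map whose output is rank one on every pure input is generated by one Kraus operator,'' and this lemma is false. Counterexample: $\mathcal{E}_0(\rho)=q_0\,\mathrm{Tr}(\rho)\,\lvert F\rangle\langle F\rvert$ with Kraus operators $L_\alpha=\sqrt{q_0}\,\lvert F\rangle\langle e_\alpha\rvert$ for an orthonormal basis $\lbrace\lvert e_\alpha\rangle\rbrace_{\alpha=1}^d$ of $\mathcal{H}_a$. Every pure input yields the rank-one output $q_0\lvert F\rangle\langle F\rvert$, yet the Kraus rank is $d$, and the proportionality constants $d_\alpha(\psi)=\sqrt{q_0}\langle e_\alpha\vert\psi\rangle$ genuinely depend on $\psi$, so the patch you sketch (a superposition test showing the constants are $\psi$-independent) fails exactly here. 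This is not a side issue: the failure branch of the theorem has precisely this unconstrained shape, $\mathcal{E}_0(\lvert\psi\rangle\langle\psi\rvert)=q_0^{(\psi)}\lvert F^{(\psi)}\rangle\langle F^{(\psi)}\rvert$, with nothing tying the output to $\lvert\psi\rangle$. Indeed, adjoining the $\mathcal{E}_0$ above (constant $F$) to legitimate single-Kraus success branches $M_k=\sqrt{q_k}\,(I_a\otimes\lvert R_k\rangle_b)$ gives a trace-preserving instrument that satisfies every hypothesis of the theorem \emph{as you formalized it} while violating $\langle F^{(\phi)}\vert F^{(\psi)}\rangle=\langle\phi\vert\psi\rangle$. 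So at the bare-instrument level the $F$-clause is not merely hard to prove --- it is false; the theorem requires the reading, used implicitly by the paper, that $b$ absorbs \emph{all} auxiliary degrees of freedom, so that $\lvert F^{(\psi)}\rangle_{ab}$ is the complete conditional state (in the instrument counterexample, a purifying environment carrying $\lvert E(\psi)\rangle=\sum_\alpha\langle e_\alpha\vert\psi\rangle\lvert e_\alpha\rangle$ would have to sit outside $b$, which the intended reading forbids).

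The paper's proof avoids the lemma entirely by dilating at the outset: $O$ is a unitary $U$ on $a\otimes b\otimes\pi$ with an $(N{+}1)$-dimensional pointer $\pi$ measured in a fixed orthonormal basis, Eq.~(\ref{eq:a1}), whence each branch map $\lvert\psi\rangle\mapsto\langle k\rvert_\pi U\lvert\psi\rangle_a\lvert\xi\rangle_b\lvert\chi\rangle_\pi$ is a single linear operator \emph{by construction}, and the purity hypothesis pins its image to the required form. For the success branches $k\geq1$ your reduction can in fact be salvaged without the false lemma, because there the $a$-factor of the output is $\lvert\psi\rangle$ itself: if $\lvert R_k^{(\psi_1)}\rangle\not\parallel\lvert R_k^{(\psi_2)}\rangle$, applying your dual-functional argument to \emph{each} Kraus operator separately forces all its coefficients on the span of $\psi_1,\psi_2$ to vanish, contradicting $q_k^{(\psi)}>0$; but for the failure branch no such rescue exists, and a single $M_0$ is exactly what your final completeness computation needs. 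Once the single-operator (dilated) form is adopted, the remainder of your argument is sound and essentially identical to the paper's: your superposition test for the $\psi$-independence of $q_k$ and $\lvert R_k\rangle$ matches the basis-expansion comparison of Eqs.~(\ref{eq:a2})--(\ref{eq:a7}), and your use of $\sum_j M_j^\dagger M_j=I_a$ is the instrument-language rewriting of the unitarity inner-product step giving Eq.~(\ref{eq:a8}), including the correct observation that the $F$-clause is vacuous when $q_0=0$.
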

\begin{proof}
In general, the physical operation $O$ can be decomposed into a unitary operation $U$ acting on the input system and an ancilla of sufficiently big dimension, followed by a projective measurement. Without loss of generality we can consider that the measurement is made on a pointer system $\pi$ of dimension $N+1$, with the outcome indicating the induced transformation. Let $\lvert\chi\rangle_\pi\in\mathcal{H}_{\pi}$ be a fixed initial state of $\pi$ and $\lbrace\lvert k\rangle_{\pi}\rbrace_{k=0}^N$  be an orthonormal basis of $\mathcal{H}_{\pi}$. $U$ must be such that
\begin{eqnarray}
\label{eq:a1}
U\lvert\psi\rangle_a\lvert\xi\rangle_b\lvert\chi\rangle_{\pi}=&&\sqrt{1-p^{(\psi)}}\lvert F^{(\psi)}\rangle_{ab}\lvert 0\rangle_{\pi}\nonumber\\
&&+\sum_{k=1}^N\sqrt{q_k^{(\psi)}}\lvert\psi\rangle_a\lvert R_k^{(\psi)}\rangle_b\lvert k\rangle_{\pi},\nonumber\\
\end{eqnarray}
where $p^{(\psi)}\equiv\sum_{k=1}^N q_k^{(\psi)}$. Let $\mathcal{H}_a$ have dimension $d$, $\lbrace\lvert e_l\rangle_a\rbrace_{l=1}^d$ be an orthonormal basis of $\mathcal{H}_a$ and the expansion of $\lvert\psi\rangle_a$ in this basis be
\begin{equation}
\label{eq:a2}
\lvert\psi\rangle_a=\sum_{l=1}^d \alpha_l\lvert e_l\rangle_a.
\end{equation}
Applying Eq.~(\ref{eq:a1}) to state $\lvert e_l\rangle_a$ we have
\begin{eqnarray}
\label{eq:a3}
U\lvert e_l\rangle_a\lvert\xi\rangle_b\lvert\chi\rangle_{\pi}=&&\sqrt{1-p^{(e_l)}}\lvert F^{(e_l)}\rangle_{ab}\lvert 0\rangle_{\pi}\nonumber\\
&&+\sum_{k=1}^N\sqrt{q_k^{(e_l)}}\lvert e_l\rangle_a\lvert R_k^{(e_l)}\rangle_b\lvert k\rangle_{\pi}.\nonumber\\
\end{eqnarray}
On the one hand, Eqs.~(\ref{eq:a2}), (\ref{eq:a3}) and the linearity of unitary evolution imply that
\begin{eqnarray}
\label{eq:a4}
U\lvert\psi\rangle_a\lvert\xi\rangle_b\lvert\chi\rangle_{\pi}=&&\sum_{l=1}^d\alpha_l\sqrt{1-p^{(e_l)}}\lvert F^{(e_l)}\rangle_{ab}\lvert 0\rangle_{\pi}\nonumber\\
&&+\sum_{l=1}^d\alpha_l\sum_{k=1}^N\sqrt{q_k^{(e_l)}}\lvert e_l\rangle_a\lvert R_k^{(e_l)}\rangle_b\lvert k\rangle_{\pi}.\nonumber\\
\end{eqnarray}
On the other hand, Eqs.~(\ref{eq:a1}) and (\ref{eq:a2}) imply
\begin{eqnarray}
\label{eq:a5}
U\lvert\psi\rangle_a\lvert\xi\rangle_b\lvert\chi\rangle_{\pi}=&&\sqrt{1-p^{(\psi)}}\lvert F^{(\psi)}\rangle_{ab}\lvert 0\rangle_{\pi}\nonumber\\
&&+\sum_{k=1}^N\sum_{l=1}^d\alpha_l\sqrt{q_k^{(\psi)}}\lvert e_l\rangle_a\lvert R_k^{(\psi)}\rangle_b\lvert k\rangle_{\pi}.\nonumber\\
\end{eqnarray}
Since $\lbrace\lvert k\rangle_{\pi}\rbrace_{k=0}^N$ and $\lbrace\lvert e_l\rangle_a\rbrace_{l=1}^d$ are orthonormal bases, it is straightforward to see that both Eqs.~(\ref{eq:a4}) and (\ref{eq:a5}) can be valid only if
\begin{equation}
\label{eq:a6}
q_k^{(e_l)}=q_k^{(\psi)},\qquad\lvert R_k^{(e_l)}\rangle_b=\lvert R_k^{(\psi)}\rangle_b,
\end{equation}
for all $l\in\lbrace1,\dotsc,d\rbrace$ and all $k\in\lbrace1,\dotsc,N\rbrace$.  Since this result is independent of the input state $\lvert\psi\rangle_a$ and of its orthogonal decomposition, it must be that $q_k^{(\psi)}$ and $\lvert R_k^{(\psi)}\rangle_b$ do not depend on $\lvert\psi\rangle_a$. Therefore,
\begin{equation}
\label{eq:a7}
q_k^{(\phi)}=q_k^{(\psi)},\qquad\lvert R_k^{(\phi)}\rangle_b=\lvert R_k^{(\psi)}\rangle_b,
\end{equation}
for all $\lvert\psi\rangle_a,\lvert\phi\rangle_a\in\mathcal{H}_a$ and $k\in\lbrace1,2,\dotsc,N\rbrace$.

Moreover, applying Eq.~(\ref{eq:a1}) to $\lvert\phi\rangle_a\in\mathcal{H}_a$, taking the inner product of $U\lvert\psi\rangle_a\lvert\xi\rangle_b\lvert\chi\rangle_{\pi}$ and $U\lvert\phi\rangle_a\lvert\xi\rangle_b\lvert\chi\rangle_{\pi}$, and using Eq.~(\ref{eq:a7}) we obtain
\begin{equation}
\label{eq:a8}
\langle F^{(\phi)}\vert F^{(\psi)}\rangle=\langle\phi\vert\psi\rangle.
\end{equation}
\end{proof}

\subsection{Details about the primed PBT protocol}

In this section we provide specific details about the \emph{primed} PBT protocol described in the main text, which achieves success probability $q_j$ and satisfies Eq.~(\ref{eq:5}) of the main text for the corresponding primed states,
\begin{equation}
\label{eq:b1}
\hat{\eta}_j'=\hat{\gamma}_{j,i}'=\frac{I}{2^n}.
\end{equation}

We will use the identity given in the main text, which is satisfied for any quantum state $\hat{\rho}$ of dimension $2^n$:
\begin{equation}
\label{eq:b2}
\frac{I}{2^n}\equiv \frac{1}{4^n}\sum_{l=1}^{4^n} V_l\hat{\rho} V_l^\dagger;
\end{equation}
where we define the set of unitary operations $\lbrace V_l\rbrace_{l=1}^{4^n}\equiv\lbrace\sigma_0,\sigma_1,\sigma_2,\sigma_3\rbrace^{\otimes n}$, $\sigma_0$ is the identity acting on $\mathbb{C}^2$ and $\lbrace\sigma_i\rbrace_{i=1}^3$ are the Pauli matrices.

The following, \emph{primed}, PBT protocol achieves success probability $q_j$ and satisfies Eq.~(\ref{eq:b1}). 

Firstly, consider the stage previous to the implementation of PBT in which the resource state is prepared and distributed to Alice and Bob. An ancilla $a'$ with Hilbert space $\mathcal{H}_{a'}$ of dimension $4^n$ is prepared in the state $\lvert\phi\rangle\equiv\frac{1}{2^n}\sum_{l=1}^{4^n}\lvert\mu_l\rangle$, where $\lbrace\lvert\mu_l\rangle\rbrace_{l=1}^{4^{n}}$ is an orthonormal basis of $\mathcal{H}_{a'}$. Consider the resource state $\lvert\xi\rangle_{AB}$ for the PBT protocol defined by Eqs. (\ref{eq:7})--(\ref{eq:8.5}) of the main text. In the primed protocol, the controlled unitary $\sum_{l=1}^{4^n} (\lvert\mu_l\rangle\langle\mu_l\rvert)_{a'}\bigotimes\limits_{i=1}^N (V_l)_{B_i}$ is applied on $\lvert\phi\rangle_{a'}\lvert\xi\rangle_{AB}$ in order to prepare the new resource state
\begin{equation}
\label{eq:b3}
\lvert\xi'\rangle_{a'AB}\equiv\frac{1}{2^n}\sum_{l=1}^{4^{n}}\bigotimes\limits_{i=1}^N (V_l)_{B_i}\lvert\mu_l\rangle_{a'}\lvert\xi\rangle_{AB},
\end{equation}
 where $(V_l)_{B_i}$ acts on $\mathcal{H}_{B_i}$ only. The system $a'A$ is sent to Alice and the system $B$ is sent to Bob. The initial state of system $B_j$ in this protocol is $\hat{\eta}'_j\equiv \text{Tr}_{a'A\tilde{B}_j}(\lvert\xi'\rangle\langle\xi'\rvert)_{a'AB}$. From the definitions of $\hat{\eta}_j'$, $\lvert\xi'\rangle$ and $\hat{\eta}_j$ (Eq.~(\ref{eq:8.6}) of the main text), Eq.~(\ref{eq:b2}), and the fact that $\hat{\eta}_j$ is independent of $\lvert\psi\rangle$, it is straightforward to obtain that 
\begin{equation}
\label{eq:b4}
\hat{\eta}'_j=\frac{1}{4^n}\sum_{l=1}^{4^n}V_l\hat{\eta}_j V_l^\dagger=\frac{I}{2^n},
\end{equation}
as claimed.

Now consider the implementation of the primed PBT protocol. Alice applies the unitary operation $W_{aa'}\equiv \sum_{l=1}^{4^n} (V_l^\dagger)_a\otimes(\lvert\mu_l\rangle\langle\mu_l\rvert)_{a'}$ on the system $aa'$, as the notation suggests. The global state transforms into
\begin{equation}
\label{eq:b5}
W_{aa'}\lvert\psi\rangle_a\lvert\xi\rangle_{a'AB}=\frac{1}{2^n}\sum_{l=1}^{4^{n}}(V_l^\dagger)_a\bigotimes\limits_{i=1}^N (V_l)_{B_i}\lvert\psi\rangle_a\lvert\mu_l\rangle_{a'}\lvert\xi\rangle_{AB}.
\end{equation}
Then, Alice applies her operations corresponding to the PBT protocol defined by Eqs.~(\ref{eq:7})--(\ref{eq:8.5}) of the main text on the system $aA$ only. With probability $q_j$, Alice obtains outcome $k=j\neq 0$. Due to the linearity of unitary evolution, it is not difficult to obtain that, in this case, the global state transforms into
\begin{equation}
\label{eq:b6}
\lvert G'_j\rangle_{a'aAB}=\frac{1}{2^n}\sum_{l=1}^{4^n}\bigotimes\limits_{\substack{i=1\\i\neq j}}^N (V_l)_{B_i}\lvert\mu_l\rangle_{a'}\lvert\psi\rangle_{B_j}\lvert R_j\rangle_{aA\tilde{B}_j}.
\end{equation}
Thus, we see that the state $\lvert\psi\rangle$ is teleported to port $B_j$, as required. This protocol works because, as we can see from Eq.~(\ref{eq:b5}), the operations on system $B$ commute with the operations performed by Alice on $aA$, which is necessary for satisfaction of the no-signaling principle. Therefore, this protocol is equivalent to the following: conditioned on $a'$ being in the state $\lvert\mu_l\rangle_{a'}$, if an outcome $k=j\neq 0$ is obtained, the state $V_l^{\dagger}\lvert\psi\rangle$ is teleported to port $B_j$; then, Bob applies  $\bigotimes\limits_{i=1}^N (V_l)_{B_i}$, after which, the state of system $B_j$ transforms into $\lvert\psi\rangle$, as desired.

The state of system $B_j$, after an outcome $k=i\notin\lbrace 0,j\rbrace$ is obtained, is $\hat{\gamma}'_{j,i}\equiv \text{Tr}_{a'aA\tilde{B}_j}(\lvert G'_i\rangle\langle G'_i\rvert)_{a'aAB}$.  From the definitions of $\hat{\gamma}_{j,i}'$ and $\hat{\gamma}_{j,i}$ (Eq.~(\ref{eq:8.8}) of the main text), Eq.~(\ref{eq:b2}), and the fact that $\hat{\gamma}_{j,i}$ is independent of $\lvert\psi\rangle$, it can easily be obtained that 
\begin{equation}
\label{eq:b7}
\hat{\gamma}'_{j,i}=\frac{1}{4^n}\sum_{l=1}^{4^n}V_l\hat{\gamma}_{j,i} V_l^\dagger=\frac{I}{2^n},
\end{equation}
as claimed.

If Alice obtains the outcome $k=0$, the final global state is
\begin{equation}
\label{eq:b8}
\lvert G'_0\rangle_{a'aAB}=\frac{1}{2^n}\sum_{l=1}^{4^n}\bigotimes\limits_{i=1}^N (V_l)_{B_i}\lvert\mu_l\rangle_{a'}\lvert F^{(\psi_l)}\rangle_{aAB},
\end{equation}
where $\psi_l$ refers to dependence on the state $V_l^{\dagger}\lvert\psi\rangle$.
The final state of system $B_j$ in this case is $\hat{\omega}_j'^{(\psi)}\equiv \text{Tr}_{a'aA\tilde{B}_j}\left(\lvert G_0'\rangle\langle G_0'\lvert\right)_{a'aAB}$. From the previous definition of $\hat{\omega}_j'^{(\psi)}$ and that one of $\hat{\omega}_j^{(\psi)}$, given by Eq.~(\ref{eq:8.9}) of the main text, it is straightforward to obtain that
\begin{equation}
\label{eq:b9}
\hat{\omega}_j'^{(\psi)}=\frac{1}{4^n}\sum_{l=1}^{4^n}V_l\hat{\omega}_j^{(\psi_l)}V_l^{\dagger},
\end{equation}
as given in the main text.
\end{document}